\numberwithin{equation}{section}
\newtheorem{theorem}[equation]{Theorem}
\newtheorem{corollary}[equation]{Corollary}
\newtheorem{lemma}[equation]{Lemma}
\theoremstyle{definition}
\theoremstyle{remark}
\newtheorem{remark}[equation]{Remark}
\DeclareMathOperator{\mat}{\textrm{Mat}}
\DeclareMathOperator{\Ker}{\textrm{Ker}}
\DeclareMathOperator{\im}{\textrm{Im}}
\DeclareMathOperator{\zp}{\mathbb{Z}_{p}}
\DeclareMathOperator{\zpr}{\mathbb{Z}_{p^r}}
\DeclareMathOperator{\zprk}{\mathbb{Z}_{p^r}^{k}}
\DeclareMathOperator{\zprn}{\mathbb{Z}^{n}_{p^r}}
\DeclareMathOperator{\zprz}{\mathbb{Z}_{p^r}[z]}
\begin{document}
\emergencystretch 3em

\title{Decoding convolutional codes over finite rings. A linear dynamical systems approach}

%    Information for first author
\author{\'Angel Luis Mu\~noz Casta\~neda}
%    Address of record for the research reported here
\address{Departamento de Matem\'aticas, Universidad de Le\'on, Le\'on, Spain}
%    Current address
%\curraddr{Department of Mathematics, University of Le\'on, Spain}
\email{amunc@unileon.es}
%    \thanks will become a 1st page footnote.
\thanks{This work is part of the project TED2021-131158A-I00, funded by MCIN/ AEI/10.13039/501100011033 and by the European Union “NextGenerationEU”/PRTR}

%    Information for second author
\author{Noem\'i DeCastro-Garc\'ia}
\address{Departamento de Matem\'aticas, Universidad de Le\'on, Le\'on, Spain}
\email{ncasg@unileon.es}

\author{Miguel V. Carriegos}
\address{Departamento de Matem\'aticas, Universidad de Le\'on, Le\'on, Spain}
\email{mcarv@unileon.es}

\subjclass[2020]{94B10; 94B35; 93B20; }

\date{}

%\dedicatory{This paper is dedicated to ......}

\keywords{codes over rings; dynamical systems; decoding}

%%%%%%%%%%%%%%%%%%%%%%%%%%%%%%%%%%%%%%%%%%%%%%%%%%%%%%%%%%%%%%%%%%%%
%%%%%%%%%%%%%%%%%%%%%%%%%%%%%%%%%%%%%%%%%%%%%%%%%%%%%%%%%%%%%%%%%%%%
%%%%%%%%%%%%%%%%%%%%%%%%%%%%%%%%%%%%%%%%%%%%%%%%%%%%%%%%%%%%%%%%%%%%

\begin{abstract}
Observable convolutional codes defined over \( \zpr\) with the Predictable Degree Property admit minimal  input/state/output representations that preserve structural properties under scalar restriction. We make use of this fact to present Rosenthal’s decoding algorithm for these convolutional codes. When combined with the Greferath-Vellbinger algorithm and a modified version of the Torrecillas-Lobillo-Navarro algorithm, the decoding problem of convolutional codes over \( \zpr\) reduces to selecting two decoding algorithms for linear block codes over a field. Finally, we analyze both the theoretical and practical error-correction capabilities of the combined algorithm as well as its time complexity.
\end{abstract}

\maketitle

%\tableofcontents{}

%%%%%%%%%%%%%%%%%%%%%%%%%%%%%%%%%%%%%%%%%%%%%%%%%%%%%%%%%%%%%%%%%%%%
%%%%%%%%%%%%%%%%%%%%%%%%%%%%%%%%%%%%%%%%%%%%%%%%%%%%%%%%%%%%%%%%%%%%
%%%%%%%%%%%%%%%%%%%%%%%%%%%%%%%%%%%%%%%%%%%%%%%%%%%%%%%%%%%%%%%%%%%%

\section{Introduction}\label{sec1}
P. Elias introduced convolutional codes in 1955 \cite{elias}, and G. D. Forney provided the first algebraic–theoretical approach to convolutional codes in \cite{forney}. Within the theory of convolutional codes, two fundamental problems naturally arise. The first concerns the design of convolutional codes with strong error-correction capabilities, while the second addresses the decoding problem, that is, the recovery of the transmitted information from a corrupted received word. If we focus on the last one, the search for efficient decoding algorithms for convolutional codes is a nontrivial task. For this reason, a current line of research in convolutional coding theory focuses on constructing codes with good decoding properties, commonly referred to as the \emph{Good Decodable Property} (GDP) \cite{constructionnuevo}, that is, codes for which efficient decoding algorithms exist. One effective way to address both the design and decoding problems is to study
convolutional codes through alternative mathematical frameworks. Among the most
widely studied approaches is the connection between convolutional codes and linear
dynamical systems via the well-known input/state/output (I/S/O) representations,
introduced and developed in~\cite{behaviors,bch}. These representations associate a
convolutional code (or its encoder) with a linear dynamical system. 

An I/S/O
representation of an $(n,k,\delta$)- convolutional code $\mathcal{C}$ over a finite field $\mathbb{F}_{q}$ denoted by $ \mathbb{F}$ is described by the equations
\[
x_{t+1} = A x_t + B u_t,
\]
\[
y_t = C x_t + D u_t,
\]
where $x_t \in \mathbb{F}^\delta$ denotes the state vector, $u_t \in \mathbb{F}^k$ is the input vector,
and $y_t \in \mathbb{F}^{n-k}$ represents the output vector. The collection of matrices $\Sigma (\mathcal{C})=(A,B,C,D)\in 
\mat_{\delta \times \delta}(\mathbb{F})
\times\mat_{\delta \times k}(\mathbb{F})
\times\mat_{(n-k) \times \delta}(\mathbb{F}) 
\times\mat_{(n-k) \times k}(\mathbb{F})
$
constitutes the I/S/O representation of the convolutional code $\mathcal{C}$. This framework enables the construction of decoding algorithms by explicitly
describing the algebraic dependency between the information vector and the output
vector through the state-space equations. As a consequence, the decoding problem can
be translated into the problem of solving systems of linear equations over the
underlying field. 

The first decoding algorithm based on I/S/O representations, and computationally more
efficient than the Viterbi algorithm for certain families of convolutional codes, was proposed
in~\cite{decodingros} and is valid for non-catastrophic convolutional codes. Motivated by this result, the decoding problem
of convolutional codes using I/S/O representations has been extensively studied in
the literature and remains an active area of research (see, for instance,
\cite{isabeldecoding,paconuevo,angel2,tesisvirtudes,lieb}). However, most decoding algorithms based on I/S/O representations have been developed over
finite fields. Extending these techniques to convolutional codes
defined over finite rings is a natural and relevant step, both from a theoretical
perspective and in view of potential applications.

The study of convolutional codes over rings, and in particular over
$\zpr$, was initiated by J.~L.~Massey and T.~Mittelholzer
in~\cite{Massey2,Massey}. The motivation for considering finite rings in this setting
stems from their potential applications in phase modulation schemes, where the richer
algebraic structure of rings may provide practical advantages. Over the past decades,
a substantial body of literature has been developed on convolutional codes over rings
(see  \cite{Fagnani,ccrings3,PINTOKUIJPOLDER,ccringsminimality2,mit,napp2018,OBSERVABLEFINITERINGS}),
covering both theoretical foundations and applied aspects. Despite this progress, the
extension of well-established decoding strategies from the field case to the ring
setting remains relatively underexplored, mainly from the I/S/O representations point of view.

In this article, we focus on the decoding process for convolutional codes defined over
finite rings~\cite{listdecoding1,Lstdecoding2}. In particular, we investigate how
minimal I/S/O representations can be employed to decode convolutional codes when the underlying alphabet is a finite ring such as $\zpr$.
More precisely, we study whether Rosenthal's decoding algorithm for convolutional
codes over finite fields can be extended to the ring setting, specifically to
$\zpr$. To this end, we model convolutional codes as submodules of
$\zpr[z]^n$ and recall the necessary conditions on the encoder, namely,
the Predictable Degree Property (PDP), and then an observable convolutional code, that guarantee the existence of an I/S/O
representation~\cite{isoringmunoz,NAPPfiniterings}. Under these assumptions, we reduce
the decoding problem for convolutional codes over $\zpr$ to the decoding
of two linear block codes over the same ring, in a manner analogous to Rosenthal's
decoding algorithm for observable convolutional codes over finite fields. Finally, we
provide a performance evaluation of the proposed decoding algorithm. 
Extending decoding methods for convolutional codes from finite fields to finite rings is not always a purely formal step. While Rosenthal's decoding procedure can be transferred to rings such as $\zpr$ with essentially the same algorithmic structure, several field-based arguments  require additional care in the ring setting. In particular, when using I/S/O representations, one must ensure that notions such as minimality and observability are formulated and verified appropriately over $\zpr$, and the linear systems arising in decoding may involve ring-specific subtleties. Hence, even though the decoding steps may translate with little difficulty, the supporting algebraic justifications should be revisited to avoid implicitly relying on properties that hold only over fields.

The article is organized as follows. Section II includes an overview of convolutional codes and their relation with linear dynamical systems over \( \zpr \).  The main results are developed in Section III, and they are completed in Section IV with the study of the performance of the proposed decoding algorithm. Finally, the conclusions and the references are given.

%\paragraph{Notation}
%We will use the symbols $\oplus$ and $*$ to denote the sum and the multiplication in $\zpr$ and the symbols $+$ and $\cdot$ to denote the sum and the multiplication in $\mathbb{F}_{p}$. If no confusion arises, we will use the symbol $\sum$ instead of $\bigoplus$ to denote the summatory in $\zpr$.

%We fix a prime number $p\geq 2$ and a natural number $r\geq 1$. We denote by $\zpr$ the quotient ring $\mathbb{Z}/{p^r\mathbb{Z}}$   and by $\zpr[z]$ the ring of polynomials with coefficients in $\zpr$. The residue field of the finite ring $\zpr$ is the finite field $\mathbb{F}_{p}$. 
%
%Given an element $z\in\zpr$ with $p$-adic expansion $z=\sum_{j=0}^{r-1}z_{j}p^{j}$, we define $z^{(i)}:=\sum_{j=0}^{i}z_{j}p^{j}$ and $^{(i)}z:=\sum_{j=i}^{r-1}z_{j}p^{j}$. We extend these definitions to vectors and matrices with coefficients in $\zpr$. If an element $z\in\zpr$ can be written as $z=z' p^{r-1}$ with $z'\in\mathbb{F}_{p}$, we will use the notation $z'=\dfrac{z}{p^{r-1}}$.
%
%Finally, given an ordered set $c=(c_0,\hdots, c_{l-1})$ and natural numbers $i<j\leq l-1$, we will denote by $c[i,j]$ the ordered set $(c_i,\hdots, c_{j})$.

\section{Preliminaries}\label{sec:preliminaries}

We begin by introducing notation that will be used throughout the article.

We fix a prime number $p\geq 2$ and a natural number $r\geq 1$. We denote by $\zpr$ the quotient ring $\mathbb{Z}/{p^r\mathbb{Z}}$   and by $\zpr[z]$ the ring of polynomials with coefficients in $\zpr$. The residue field of the finite ring $\zpr$ is the finite field $\mathbb{F}_{p}$. 

Given an element $z\in\zpr$ with $p$-adic expansion $z=\sum_{j=0}^{r-1}z_{j}p^{j}$ with $z_{j} \in \mathbb{F}_{p}$, we define $z^{(i)}:=\sum_{j=0}^{i}z_{j}p^{j}$ and $^{(i)}z:=\sum_{j=i}^{r-1}z_{j}p^{j}$. We extend these definitions to vectors and matrices with coefficients in $\zpr$. If an element $z\in\zpr$ can be written as $z=z' p^{r-1}$ with $z'\in\mathbb{F}_{p}$, we will use the notation $z'=\dfrac{z}{p^{r-1}}$.

Finally, given an ordered set $c=(c_0,\hdots, c_{l-1})$ and natural numbers $i<j\leq l-1$, we will denote by $c[i,j]$ the ordered set $(c_i,\hdots, c_{j})$.

\subsection{Linear block codes over $\zpr$}\label{sec:block}

An $(n,k)$ linear block code over $\zpr$ is a submodule $\mathcal{C}\subset\zprn$ of rank $k$ whose elements are called codewords. Linear block codes over $\zpr$ are not necessary free submodules. Thus, we say that the code $\mathcal{C}\subset\zprn$ is free if it is a free submodule. On the other hand, the linear block code $\mathcal{C}\subset\zprn$ is said to be splitting if it is a direct summand of $\zprn$.  

An encoder $G$ for the linear block code $\mathcal{C}\subset\zprn$ is a full rank (injective) generator matrix $G\in \text{Mat}_{n \times k}(\zpr)$ whose column space is equal to $\mathcal{C}$. The encoder $G$ is said to be strong if it has a full size minor which is a unit in $\zpr$. This is equivalent to say that the $p$-adic expansion $G=\sum_{j=0}^{r-1}G_{j}p^{j}$ satisfy that $G_0$ is full rank. It is well-known that a linear code $\mathcal{C}\subset\zprn$ admits a strong encoder if and only if $\mathcal{C}$ is splitting (\cite[Lemma 2]{lincodesrings}).

Given an $(n,k)$ linear block code $\mathcal{C} \subset \zprn$, a {parity-check matrix} is a full-rank matrix $H \in \mathrm{Mat}_{(n-k) \times n}(\zpr)$ such that $Ker(H) = \mathcal{C}$. The code $\mathcal{C}$ admits a parity-check matrix if and only if it is splitting. In particular, as established in the previous paragraph, this is also equivalent to the existence of a strong encoder. Hence, the existence of strong encoders and parity-check matrices are both equivalent to the splitting property of the code.

The (minimum) distance of a linear block code  $\mathcal{C}\subset\zprn$ is defined as 
$
d_{\textrm{min}}:=\textrm{min}_{0\neq c\in\mathcal{C}}\{\omega_{H}(c)\},
$
where $\omega_{H}(c)$ is the  Hamming weight of $c\in\mathcal{C}$. As for any block code, $\mathcal{C}\subset\zprn$ can correct all weight errors up to $t$ if and only if 
\begin{equation}\label{eq:dmin}
d_{\textrm{min}}=2t + 1.
\end{equation}
An important feature of any splitting linear block code $\mathcal{C}\subset\zprn$, is that the minimum distance coincides with the minimum distance of its restriction modulo $p$ ,
$$d_{\textrm{min}}(\mathcal{C})=d_{\textrm{min}}(\mathcal{C}\textrm{ mod }p).$$

Given an $(n,k)$ splitting linear block code over $\zpr$ of rank $k$, $\mathcal{C}\subset\zprn$, a parity-check matrix $H\in \text{Mat}_{n-k \times n}(\zpr)$ and a vector  $e\in\zprn$, its syndrome is $He\in\mathbb{Z}^{n-k}_{p^r}$ while its coset is $e+\mathcal{C}\subset \zprn$. The (Hamming) weight of the coset $e+\mathcal{C}$ is $\omega_{H}(e+\mathcal{C}):=\textrm{min}_{e'\in e+\mathcal{C}}\{\omega_{H}(e')\}$. A coset leader of $e+\mathcal{C}$ is an element $e'\in e+\mathcal{C}$ such that $\omega_{H}(e')=\omega_{H}(e+\mathcal{C})$. It is clear that if $e\in\zprn$ has weight $\omega_{H}(e)\leq \lfloor(d-1)/2\rfloor$ then $e$ is the unique coset leader in its coset.

\subsection{Convolutional Codes over $\zpr$}\label{sec:conv}
%In this section, we include  some fundamental results about convolutional codes. 

Let $k$ and $n$ be natural numbers such that $k \leq n$. An $(n,k)$ convolutional code over $\zpr$ is defined as a submodule 
$\mathcal{C} \subset \zpr[z]^n$
of rank $k$ \cite[Definition 1]{isoringmunoz}. As in the linear block case, elements of $\mathcal{C}$ are called codewords. 

For an $(n,k)$ convolutional code $\mathcal{C}$ over $\zpr$, a convolutional encoder is a full rank (i.e. injective) matrix 
$G(z) \in \text{Mat}_{n \times k}(\zpr[z])$  
such that the column space of $G(z)$ is equal to $\mathcal{C}$. Note that convolutional codes over $\zpr$ admitting convolutional
encoders are necessarily free.

%On the other hand, two convolutional codes $\mathcal{C}, \mathcal{C}'  \subset \zpr[z]^n$ are equivalent if there is a permutation $\sigma \in \mathfrak{S}_n$ acting on the codeword components such that $\sigma(\mathcal{C})=\mathcal{C}'$
%\cite[Definition 12]{isoringmunoz}. Two convolutional encoders $G(z), G'(z)  \in \text{Mat}_{n \times k}(\zpr[z])$ are equivalent if they determine the same convolutional code, i.e., if and only if there is an invertible matrix $U(z)\in \text{Mat}_{k \times k}(\zpr[z])$ such that $G(z)= G'(z) U(z)$ \cite[Theorem 1]{ccrings3}.

One important property of a convolutional code is that it is non-catastrophic; that is, observable. An $(n,k)$ convolutional code $\mathcal{C} \subset \zpr[z]^n$ is said to be {observable} if the natural quotient map 
$$\zpr[z]^n \twoheadrightarrow \zpr[z]^n/\mathcal{C}$$ 
has a right inverse, which means that $\zpr[z]^n/\mathcal{C}$ is free of rank $n-k$ \cite[Remark 1]{isoringmunoz} and  \cite[\S IIIA]{Fagnani}. In fact $\mathcal{C}$ is non-catastrophic if and only if one has the decomposition $\zpr[z]\cong\mathcal{C}\oplus\left(\zpr[z]^n/\mathcal{C}\right)$. In particular, if $\mathcal{C}$ is non-catastrophic or observable then it is a free module. From now on, every convolutional code will be assumed to be a free module.

The $i$-th constraint length, $\nu_i$, of a convolutional encoder $G(z)$ is defined as the maximum degree of the entries in the $i$-th column of $G(z)$. Without loss of generality, we may assume that $\nu_1 \geq \dots \geq \nu_k$  \cite[Definition 5]{isoringmunoz}. The maximum constraint length, $\nu_1$, is called the memory of the encoder. The complexity (or degree) of $G(z)$ is given by $\delta_{G(z)} := \sum_{i=1}^k \nu_i$. The complexity  of $\mathcal{C}$, denoted by $\delta_{\mathcal{C}}$, is defined as 
$$\delta_{\mathcal{C}}:=\textrm{min}\{\delta_{G(z)}| \ G(z) \textrm{ encoder of }\mathcal{C}\}$$
Clearly,
$$
\delta_{\mathcal{C}}\geq \textrm{max}
\left\{
\begin{array}{l|l}
\textrm{maximum degree} &\\ 
\textrm{of the full-size} & G(z) \textrm{ encoder of }\mathcal{C}\\ 
\textrm{minors of }G(z)  &
\end{array}
\right\}=:\delta'_{\mathcal{C}}
$$
As mentioned in the Introduction, an important property in order to obtain minimal I/S/O representations of a convolutional code over $\zpr$ is the PDP. We recall the definition. Given a convolutional encoder $G(z)$ of an $(n,k)$ convolutional code $\mathcal{C}$ over $\zpr$, the following holds:
$$\text{deg}(G(z)u(z)) \leq \max \{\text{deg}(u_i(z)) + \text{deg}(g_i(z))\},$$ 
where the degree of a polynomial vector is the maximum degree among its components, and $g_i(z)$ refers to the $i$-th column of $G(z)$. The encoder $G(z) \in \text{Mat}_{n \times k}(\zpr[z])$ has the Predictable Degree Property (PDP) if the above inequality becomes an equality for every $u(z)$. The PDP is characterized by the condition that the matrix of column-wise maximum degree coefficients,
$G_h$, is injective \cite[Theorem 1]{isoringmunoz}. It follows easily that if $\mathcal{C}$ is a convolutional code $\mathcal{C}$ over $\zpr$ that has a convolutional encoder $G(z)$ with the PDP then
$
\delta_{G(z)}=\delta_{\mathcal{C}}= \delta'_{\mathcal{C}}.
$
We say that a convolutional code is PDP if it admits an encoder with the PDP.

Finally, the free distance of a convolutional code $\mathcal{C} \subset \zpr[z]^n$ is defined as
$$
d_{\textrm{free}}(\mathcal{C}) = \textrm{min}\{\omega_{H}(v(z))  | \  v(z) \in \mathcal{C}, v(z)\neq 0\}.
$$
where for $v(z) = \sum_{i\in \mathbb{N}_0} v_i z^i, \omega_{H}(v(z)) = \sum_{i\in \mathbb{N}_0} \omega_{H}(v_i)$,  $\omega_{H}(v_i)$ being the Hamming weight of $w_i$ \cite[\S 1]{solering}.

\subsection{I/S/O Representations}\label{sec:iso}

Given an $(n,k)$ observable convolutional code $\mathcal{C} \subset \zpr[z]^n$ with PDP of complexity $\delta$, there exist matrices $K \in \mat_{(\delta+n-k)\times \delta}(\zpr)$, $L \in \mat_{(\delta+n-k)\times \delta}(\zpr)$, and $M \in \mat_{(\delta+n-k)\times n}(\zpr)$ which satisfy the following conditions \cite[Theorem 2]{isoringmunoz}:
\begin{equation}\label{representation}
\begin{split}
\mathcal{C} = \{v(z) \in \zpr[z]^n :& \ \exists x(z) \in \zpr[z]^\delta \text{ such that }\\
&z K x(z) + L x(z) + M v(z) = 0\},
\end{split}
\end{equation}
and such that
\begin{enumerate}
\item $K$ is injective.
\item $(K,M)$ is surjective.
\item $(zK+L,M)$ is surjective. 
\end{enumerate}
This triple is called a minimal first-order representation of the convolutional code $\mathcal{C}$ and it is unique up to similarity transformations \cite{isoringmunoz}. Moreover, there is always a permutation $\sigma \in \mathfrak{S}_n$ such that any minimal first-order representation $(K, L, M)$ of the code $\sigma(\mathcal{C})$ is similar to a minimal first-order representation of the form
\begin{equation}\label{iso}
\left( \left(\begin{array}{c} -I\\ 0 \end{array} \right), \left(\begin{array}{c} A\\ C \end{array} \right), \left(\begin{array}{cc} 0 & B\\ -I & D\end{array} \right) \right).
\end{equation}
Here, the matrices $A \in \mat_{\delta \times \delta}(\zpr)$, $B \in \mat_{\delta \times k}(\zpr)$, $C \in \mat_{(n-k) \times \delta} (\zpr)$, and $D \in \mat_{(n-k) \times k}(\zpr)$ define a reachable and observable linear dynamical system   \cite[Corollary 2, Theorem 3 and Theorem 4]{isoringmunoz}, known as the minimal I/S/O representation of the convolutional code, which describes the encoding process for $\mathcal{C}$. To see this, denote
$x(z)=x_{0}z^{\gamma}+x_{1}z^{\gamma-1}+\hdots+x_{\gamma},
\ u(z)=u_{0}z^{\gamma}+u_{1}z^{\gamma-1}+\hdots+u_{\gamma} \textrm{ and }
\ y(z)=y_{0}z^{\gamma}+y_{1}z^{\gamma-1}+\hdots+y_{\gamma}.$
Then, the Eq. \eqref{representation} for $K$, $L$, $M$ as in Eq. \eqref{iso} leads to the following system of equations:
\begin{equation}\label{eqslinearsys}
\Sigma:=\left\{
\begin{array}{rl}
x_{t+1} &= A x_t + B u_t,\\
y_t &= C x_t + D u_t,\\
v_t &= \left( \begin{array}{c} y_t\\ u_t \end{array} \right), \ x_0 = 0, \ x_{\gamma+1} = 0.
\end{array}
\right.
\end{equation}
Again, a minimal I/S/O representation of an observable convolutional code with PDP
is unique up to similarity transformations.
Conversely, let $\Sigma=(A,B,C,D)$ be a reachable and observable linear system over
$\zpr[z]^n[z]$. Then the convolutional code $ \mathcal{C} =Ker(zK+L \mid M) $ associated with $\Sigma$ is observable.
This establishes that the correspondence between reachable and observable linear
systems and observable convolutional codes with PDP holds in both directions;
see \cite[Theorem~5]{isoringmunoz}.

Associated to a minimal I/S/O representation of an observable convolutional code with PDP there are two important scalar matrices:
\begin{enumerate}
\item The reachability matrix:
\begin{equation*}\label{phi}
\Phi_{l}(\Sigma):=\left(A^{l-1}B, \hdots, AB, B\right),
\end{equation*}
which is surjective in case $l=\delta$. That is equivalent to the dynamical system $\Sigma=(A,B,C,D)$ being reachable.
\item The observability matrix:
\begin{equation*}\label{psi}
\Psi_{l}(\Sigma):=\left(
\begin{array}{c}
C\\
CA\\
\vdots\\
CA^{l-1}
\end{array}
\right),
\end{equation*}
which is injective in case $l=\delta$. That is equivalent to the dynamical system $\Sigma=(A,B,C,D)$ being observable.
\end{enumerate}

\begin{remark}
For any pair of values $T,\Theta$ making $\Phi_{T}(\Sigma)$ and $\Psi_{\Theta}(\Sigma)$ surjective and injective, respectively, we may regard these matrices as parity-check and generator matrices, respectively, of the corresponding linear block codes over $\zpr$, $\Ker(\Phi_{T}(\Sigma)), \ \im(\Psi_{\Theta}(\Sigma))$.
Rosenthal's algorithm \cite{decodingros} works under the assumption of the existence of such values $T,\Theta$ and reduces the decoding process of the convolutional code $\mathcal{C}$ to the decoding of the linear block codes $\Ker(\Phi_{T}(\Sigma)), \ \im(\Psi_{\Theta}(\Sigma))$ defined over $\zpr$. 
Therefore, linear block codes over $\zpr$ and their decoding algorithms will play an important role in Rosenthal's decoding procedure.
    
\end{remark}

%We finish this section by showing the close relation between the PDP property of a convolutional code and the splittingness of the corresponding linear code  $\im(\Psi_{\Theta}(\Sigma))$.
We conclude this section by establishing a connection between the PDP property and the splitting property of the associated linear block code  $\im(\Psi_{l}(\Sigma))$.

\begin{lemma}
Let $\mathcal{C} \subset \zpr[z]^n$ be a $(n,k)$ observable convolutional code  of complexity $\delta$ with PDP and $\Sigma=(A,B,C,D)$  an I/S/O representation of $\mathcal{C}$. Then, for every $l\geq \delta$, $\im(\Psi_{l}(\Sigma))\subset \zpr^{l(n-k)}$ is a splitting linear code over $\zpr$ of dimension $\delta$.
\end{lemma}
\begin{proof}
Since $\mathcal{C}$ has the PDP, by \cite[Theorem 5 and Remark 7]{isoringmunoz}, the matrices $\Sigma:=(A,B,C,D)$ form a minimal I/S/O representation of $\mathcal{C}$, and their reduction modulo $p$ gives rise to matrices $\Sigma_0:=(A_0, B_0, C_0, D_0)$ that define a minimal I/S/O representation of the reduced code $\mathcal{C}_0 := \mathcal{C} \mod p$, which is a convolutional code over the finite field $\mathbb{F}_p$ of the same parameters $(n,k,\delta)$.
Let us denote by $\Psi_{l}(\Sigma_0)$ the reduction modulo $p$ of the matrix $\Psi_l(\Sigma)$. That is,
\[
\Psi_{l}(\Sigma_0) := \Psi_l(\Sigma) \mod p = 
\begin{pmatrix}
C_0 \\
C_0 A_0 \\
\vdots \\
C_0 A_0^{l-1}
\end{pmatrix}.
\]
Since the original code $\mathcal{C}$ is observable with PDP, its reduction $\mathcal{C}_0$ is also observable (see \cite[Remark 7]{isoringmunoz}). Therefore, the observability matrix $\Psi_{l}(\Sigma_0)$ is injective for all $l \geq \delta$. This means that $\Psi_l(\Sigma)$ is a strong encoder. Therefore, $\operatorname{Im}(\Psi_l(\Sigma)) \subset \zpr^{l(n-k)}$ is a splitting linear block code of rank $\delta$.
\end{proof}

\section{The decoding algorithm}

Given an $(n,k)$ convolutional code $\mathcal{C} \subset \mathbb{Z}_{p^r}[z]^n$ and a received word $\widehat{v}(z) \in \mathbb{Z}_{p^r}[z]^n$, the decoding problem consists of finding a codeword $v(z) \in \mathcal{C}$ that minimizes the Hamming weight of the error:
\begin{equation}\label{eq:decoding}
\min_{v(z) \in \mathcal{C}} \, \omega_H\big(v(z) - \widehat{v}(z)\big).
\end{equation}
If the free distance of $\mathcal{C}$ is $d$ and the error vector $e(z) := \widehat{v}(z) - v(z)$ has Hamming weight less than or equal to $\lfloor (d - 1)/2 \rfloor$, then the solution to \eqref{eq:decoding} is unique and corresponds to the originally transmitted codeword (see \cite[Proposition 2.5]{pretzel}). In this case, the decoding is said to be complete.

Note that the codewords $v(z) \in \mathcal{C}$ represent the entire information content of the transmitted message. Therefore, decoding with convolutional codes is typically performed after the complete message has been received, or once a sufficiently large portion is available to enable sliding-window decoding. This differs conceptually from the case of linear block codes $\mathcal{C} \subset \mathbb{Z}_{p^r}^n$, where the message is encoded and transmitted block by block, and decoding can be carried out independently on each block by solving
\[
\min_{v \in \mathcal{C}} \, \omega_H\big(v - \widehat{v}\big)
\]
without needing access to the entire message.

Rosenthal's decoding algorithm for convolutional codes over finite fields is based on decoding two associated linear block codes \cite{decodingros}. This structure allows decoding to be performed over segments of bounded length, provided the number of errors in each segment remains within the correctable range. The same principle extends naturally to convolutional codes over $\mathbb{Z}_{p^r}$.

In this section, we present a combined decoding procedure that merges Rosenthal's algorithm with the Greferath–Vellbinger decoding method \cite{lincodesrings} and a modified version of the Torrecillas–Lobillo–Navarro algorithm \cite{lincodesringsH}.

%
%\begin{figure}[ht]
%\centering
%\begin{tikzpicture}[node distance=1.5cm and 2.5cm, on grid, auto]
%\tikzstyle{block} = [rectangle, draw, fill=blue!10, text width=3.5cm, text centered, minimum height=1cm]
%
%% Nodes
%\node[block] (received) {Received vector\\ $r(z) \in \mathcal{C} + e(z)$};
%\node[block, below=of received] (iso) {I/S/O representation};
%
%\node[block, below left=of iso] (obs) {Observability Matrix\\ $\Psi_\Theta$};
%\node[block, below right=of iso] (reach) {Reachability Matrix\\ $\Phi_T$};
%
%\node[block, below=of obs] (block1) {Block Code 1\\ (Generator)};
%\node[block, below=of reach] (block2) {Block Code 2\\ (Parity-check)};
%
%\node[block, below=of block1] (dec1) {Decode Block Code 1};
%\node[block, below=of block2] (dec2) {Decode Block Code 2};
%
%\node[block, below=of iso, yshift=-7cm] (final) {Final Decoded Codeword};
%
%% Edges
%\draw[->] (received) -- (iso);
%\draw[->] (iso) -- (obs);
%\draw[->] (iso) -- (reach);
%\draw[->] (obs) -- (block1);
%\draw[->] (reach) -- (block2);
%\draw[->] (block1) -- (dec1);
%\draw[->] (block2) -- (dec2);
%\draw[->] (dec1) -- (final);
%\draw[->] (dec2) -- (final);
%\end{tikzpicture}
%\caption{Reduction of the decoding problem using I/S/O representation.}
%\end{figure}

\subsection{Greferath-Vellbinger (GV) algorithm for splitting linear block codes over $\zpr$}

Let $\mathcal{C}\subset\zprn$ be an $(n,k)$ splitting linear block code and $G\in\textrm{Mat}_{n\times k}(\zpr)$ an encoder for $\mathcal{C}$. Suppose that the transmitted information word is $u \in\zprk$, the codeword is $v=Gu\in\zprn$ and that the received codeword is $\widehat{v} \in\zprn$. 
Let $e:=\widehat{v}- v$ be the error vector. 
Assume we have a decoding algorithm $\partial$ for the linear code $\mathcal{C}_{0}:=\mathcal{C} \ \textrm{mod}(p)$.
%The GV algorithm computes the coefficients of the $p$-adic expansions of $u,e$ iteratively as stated in Algorithm \ref{alg:G} (see \cite[pp. 1290]{lincodesrings}). 
%It corrects any error pattern $e$ for which each component of its $p$-adic expansion is correctable by $\partial$.

At each iteration $i$, the GV algorithm  (see \cite[pp. 1290]{lincodesrings}) computes an auxiliary quantity $\delta_i$ which isolates the contribution of the $i$-th $p$-adic digit of the information and error words. This is achieved by subtracting from the truncated received vector $\widehat{v}^{(i)}$ the contribution of previously computed digits $u^{(i-1)}$ and $e^{(i-1)}$, and dividing the resulting difference by $p^i$ (see notation at Section \ref{sec:preliminaries}):
\[
\delta_i := \frac{\widehat{v}^{(i)} - G^{(i)} u^{(i-1)} - e^{(i-1)}}{p^i}.
\]
At this point, a decoder $\partial$ (over $\mathbb{F}_{p}$) is used to solve the equation 
$$\delta_i = G_0 u_i + e_i.$$ Since $G$ is assumed to be a strong generator matrix, the code $\mathcal{C}_0$ is a linear code over $\mathbb{F}_p$ with the same {\color{red}{dimension}}. The correctness of the method relies on the uniqueness of the $p$-adic expansion: as long as each error component $e_i$ lies within the correction capacity of $\partial$, the original message can be fully reconstructed layer by layer. This approach reduces the decoding over $\zpr$ to successive decoding problems over $\mathbb{F}_p$, preserving both correctness and efficiency.

The pseudocode of the algorithm is as follows:

\begin{algorithm}[H]
\caption{GV algorithm}\label{alg:G}
\begin{algorithmic}
\State \hspace{-0.4cm} \textbf{Parameters:} $G\in\textrm{Mat}_{n\times k}(\zpr)$ encoder of an $(n,k)$ free linear block code and a decoding algorithm $\partial$ for the linear code given by $G_0:=G\ \textrm{mod}(p)$.
\State \hspace{-0.4cm} \textbf{Input:}  A received word $\widehat{v}=Gu+ e\in\zprn$.
\State \hspace{-0.4cm}  \textbf{Output:} $u,v,e\in\zprn$
\State $i=0$
\While{$i \leq r-1$} 
    \State $\delta_i \gets \dfrac{\widehat{v}^{(i)}- G^{(i)}u^{(i-1)}- e^{(i-1)}}{p^{i}}$
    \State Solve $\delta_i=G_0 u_i +e_i$ with $\partial$  
\State $i=i+1$
\EndWhile
\State \Return{$e,u, v=Gu$}
\end{algorithmic}
\end{algorithm}

\subsection{Torrecillas-Lobillo-Navarro (TLN) algorithm adapted to free linear block codes over $\zpr$}

Let $\mathcal{C}\subset\zprn$ be an $(n,k)$ linear block code and $H\in\textrm{Mat}_{q\times n}(\zpr)$ a (not necessary full-rank) parity-check matrix for $\mathcal{C}$. Suppose that the transmitted codeword is $v\in\zprn$ and that the received codeword is $\widehat{v} \in\zprn$. Let $e:=\widehat{v}- v$ be the error vector. Let $s:=H\widehat{v} =He$ be the syndrome of the received word. The TLN algorithm computes the coefficients of the $p$-adic expansion of the error $e$ iteratively. It assumes that the parity-check matrix $H$ is in \emph{standard form}, meaning that there are matrices $^{(i)}\widehat{H}\in\textrm{Mat}_{q_i\times n}(\zpr)$, $i=1,\hdots,r-1$ with $p$-adic expansion $^{(i)}\widehat{H}=\sum_{j=i}^{r-1}\Lambda_{ij}p^{j}$ such that $H=(^{(0)}\widehat{H}|^{(1)}\widehat{H}|\cdots|^{(r-1)}\widehat{H})$ (thus $\sum q_i =q$), and satisfying $\Lambda_{ii}$ is full rank for each $i$. Further, the algorithm assumes that $r$ decoding algorithms have been fixed, one for each linear code $\Ker$($\Lambda_{ii})$.

Note that condition $\Lambda_{ii}$ is full rank for each $i$ makes $^{(i)}\widehat{H}\neq 0$ for each $i$ so that $q$ must satisfy $q\geq r$.  In our case, we will have a full rank parity-check matrix $H$. We will show now how to get advantage of this fact to apply TLN algorithm without computing the standard form, and using only one decoding algorithm for the code defined by $H_0:=H \textrm{mod}(p)$.

Let $H\in\textrm{Mat}_{n-k\times n}(\zpr)$ be a (full rank) parity-check matrix of an $(n,k)$ free linear block code $\mathcal{C}\subset\zprn$ with $p$-adic expansion $H=\sum_{i=0}^{r-1}H_{i}p^{i}$. For each $i=0,\hdots,r-1$, we construct the matrices $^{(i)}\widehat{H}=p^{i}H$ and $\widehat{H}=(^{(0)}\widehat{H}|^{(1)}\widehat{H}|\cdots|^{(r-1)}\widehat{H})$. Clearly, $^{(0)}\widehat{H}=H$. Consider the $p$-adic expansions of these matrices, $^{(i)}\widehat{H}=\sum_{j=i}^{r-1}\Lambda_{ij}p^{j}$. The following facts are easy to check:
\begin{enumerate}
\item $\Lambda_{00}=H_0$ is full rank.
\item $\Lambda_{ij}=\Lambda_{0j-i}=H_{j-i}$. So $\Lambda_{ii}=H_0$ is full rank.
\item $\Ker(\widehat{H})=\Ker(H)=\mathcal{C}$.
\end{enumerate}
Under these conditions, TLN algorithm can be applied to compute the error $e$ from the equation $\widehat{s}=\widehat{H}e$, where $\widehat{s}=(^{(0)}\widehat{s}|^{(1)}\widehat{s}|\cdots|^{(r-1)}\widehat{s})$ and $^{(i)}\widehat{s}:=p^{i}s$.

Let $e=\sum_{j=0}^{r-1}e_{j}p^{j}$ and $s=\sum_{j=0}^{r-1}s_{j}p^{j}$ be the $p$-adic expansions of $e$ and $s$, respectively. Note that the $p$-adic expansion of $^{(i)}\widehat{s}$ is $^{(i)}\widehat{s}=\sum_{j=i}^{r-1}s_{j-i}p^{j}$. Now, the equation $^{(i)}\widehat{s}=\widehat{H}e ^{(i)}$ leads to the following equation (see \cite[Eq. (12)]{lincodesringsH} and condition 2) above)
\begin{equation}\label{eq:key}
\sum_{j=i}^{r-1}s_{j-i}p^{j}=\sum_{j=i}^{r-1}\sum_{m=0}^{j-i}H_{j-i-m}e_{m} p^{j}
\end{equation}
Assume that $e_0,\hdots,e_{l-1}$ are known.  Then, from Eq. (\ref{eq:key}) it follows that there exists $\delta_l \in\mathbb{F}_{p}$ such that
\begin{equation}\label{eq:key2}
\begin{split}
\Xi(l)&:=\big(\sum_{j=r-l}^{r-1}s_{j-r+l}p^{j}\big)\\
&-\big(\sum_{j=r-l}^{r-2}\sum_{m=0}^{j-r+l}H_{j-r+l-m}e_{m} p^{j}\big)\\
&-\big(\sum_{m=0}^{l-1}H_{l-m-1}e_{m}p^{r-1}\big)=\delta_l p^{r-1}
\end{split}
\end{equation}
holds true. The three summands appearing in $\Xi(l)$ are considered as elements in $\zpr$.
On the other hand, note that the left hand side is known, so $\delta_l$ can be computed. Thus, Eq. (\ref{eq:key}) implies that (see \cite[Eq. (20)]{lincodesringsH})
$$
\delta_l=H_{0}e_{l} 
$$
in $\zp$. At this point we may apply a decoder $\partial$ to recover $e_l$ as long as each error component $e_i$ lies within the correction capacity of $\partial$.

%
%\begin{figure}[H]
%\centering
%\begin{tikzpicture}[node distance=0.8cm and 0.2cm]
%
%% Memory cells with \widehat{s} labels inside
%\foreach \i/\label in {
%    0/{$\widehat{s}^{(r-1)}$},
%    1/{$\widehat{s}^{(r-2)}$},
%    2/{$\cdots$},
%    3/{$\widehat{s}^{(r-l)}$},
%    4/{$\widehat{s}^{(r-l-1)}$},
%    5/{$\cdots$},
%    6/{$\widehat{s}^{(1)}$}
%} {
%    \node[draw, fill=green!10, minimum width=1.4cm, minimum height=1cm] (s\i) at (\i*1.5,0) {\label};
%}
%
%% Labels below for e_i only
%\foreach \i/\elabel in {0/{$e_0$}, 1/{$e_1$}, 3/{$e_{i-1}$}} {
%    \node[text=red!70!black] at (\i*1.5,-1.1) {\elabel};
%}
%
%% Arrow up to s^{(i)}
%\draw[->, thick] (6,-0.5) -- (6, -1.0);
%
%% Explanation under brace
%\draw [decorate, decoration={brace, mirror, amplitude=5pt}] (4.5,-1.4) -- (7.5,-1.4)
%      node[midway, below=6pt, text width=4.5cm, align=center] 
%      {
%        $\Xi(l+1) := (\ref{eq:key2})$\\
%        $\delta_{l+1} := \dfrac{\Xi(i)}{p^{r-1}}$,\\ Solve $\delta_{l+1} = H_0 e_l$ with $\partial$
%      };
%
%% Side labels
%\node[anchor=east] at (-1,0.3) {\textbf{Syndromes:}};
%\node[anchor=east, text=red!70!black] at (-1,-1.1) {\textbf{Error:}};
%
%\end{tikzpicture}
%\caption{Memory layout and decoding step at position $i$ in the TLN algorithm.}
%\label{fig:tln-decoding-compact}
%\end{figure}

%As GV algorithm, TLN algorithm corrects any error pattern $e$ for which each component of its $p$-adic expansion is correctable by $\partial$. 
The pseudocode of the algorithm is as follows:

\begin{algorithm}[H]
\caption{Adapted TLN algorithm}\label{alg:H}
\begin{algorithmic}
\State \hspace{-0.4cm} \textbf{Parameters:}   $H\in\textrm{Mat}_{n-k\times n}(\zpr)$ full rank parity-check matrix of an $(n,k)$ free linear block code and a decoding algorithm $\partial$ for the linear code given by $H_0:=H \ \textrm{mod}(p)$.
\State \hspace{-0.4cm} \textbf{Input:}  A syndrome $s:=H\widehat{v} \in\mathbb{Z}^{n-k}_{p^{r}}$.
\State \hspace{-0.4cm}  \textbf{Output:} $e=\sum_{j=0}^{r-1}e_{j}p^{j}\in\zprn$.
\State $i=0$
\While{$i \leq r-1$}  
\State  $\delta_i \gets \dfrac{\Xi(i)}{p^{r-1}}$
\State Solve $\delta_i= H_0 e_i$  with $\partial$
\State $i=i+1$
\EndWhile
\State \Return{$e,v=\widehat{v}- e$}
\end{algorithmic}
\end{algorithm}

\subsection{Rosenthal's decoding algorithm for convolutional codes over $\zpr$}

Let $\mathcal{C} \subset \zpr[z]^n$ be an $(n,k)$ observable convolutional code  with PDP and complexity $\delta_{\mathcal{C}}=\delta$, and let $\Sigma=(A,B,C,D)$ be a minimal I/S/O representation over $\zpr$. Rosenthal's decoding algorithm is based on this fact:
a code sequence 
$\left\{c_t= \left(\begin{array}{c}y_t \\ u_t \end{array}\right)\right\}$
must satisfy
%\begin{widetext}
\begin{equation}\label{eq:trajectories1}
\begin{split}
&\left( \begin{array}{c}
y_{\tau+l+1}\\
y_{\tau+l+2}\\
\vdots\\
y_{\tau+l+\Theta}
\end{array}\right)-
\left( \begin{array}{ccccc}
D & 0 & \hdots &  & 0\\
CB & D & \ddots & & \vdots\\
CAB & CB & \ddots & \ddots & \\
\vdots & & \ddots & \ddots & 0\\
CA^{\Theta-2}B & CA^{\Theta-3}B & \hdots & CB & D
\end{array}\right)
\left( \begin{array}{c}
u_{\tau+l+1}\\
u_{\tau+l+2}\\
\vdots\\
u_{\tau+l+\Theta}
\end{array}\right)=\\
=&\left( \begin{array}{c}
C\\
CA\\
\vdots\\
CA^{\Theta-1}
\end{array}\right)x_{\tau+l+1}
\end{split}
\end{equation}
%\end{widetext}
and
\begin{equation}\label{eq:trajectories2}
x_{\tau+l+1}-
A^{l+1}x_{\tau}=
\left(
A^{l} B, A^{l-1}B,\hdots,B
\right)
\left( \begin{array}{c}
u_{\tau}\\
u_{\tau+1}\\
\vdots\\
u_{\tau+l}
\end{array}\right)
\end{equation}
for every $\tau,l,\Theta\in\mathbb{N}$.
We will explain now the basis of Rosenthal's decoding algorithm. Let us make the next assumptions for the moment.
\begin{enumerate}
\item $\Phi_{l+1}(\Sigma)$ is surjective and the code $\Ker(\Phi_{l}(\Sigma))$ has distance $d_1$.
\item $\Psi_{\Theta}(\Sigma)$ is injective and the code $\im(\Psi_{\Theta}(\Sigma))$ has distance $d_2$.
\item We have computed the codewords from $t=0$ to $t=\tau-1$ and we know $x_{\tau}$. 
\item The inputs ($u's$) in Eq. (\ref{eq:trajectories1}) has been received correctly.
\item The vector of outputs ($y$'s) in Eq. (\ref{eq:trajectories2}) has weight less than or equal to $[\dfrac{d_2-1}{2}]$.
\item The vector of inputs in Eq. (\ref{eq:trajectories2}) has weight less than or equal to $[\dfrac{d_1-1}{2}]$.
\end{enumerate}
Note that under above assumptions 2), 4) and 5), we may decode $\Psi_{\Theta}(\Sigma)$ (see (\ref{sec:block}), Eq. (\ref{eq:dmin})) and find 
\begin{equation}\label{eq:step1}
\left( \begin{array}{c}
y_{\tau+l+1}\\
y_{\tau+l+2}\\
\vdots\\
y_{\tau+l+\Theta}
\end{array}\right)
\textrm{ and }
x_{\tau+l+1}
\end{equation} 
Now, with $x_{\tau+l+1}$ we can compute the syndrome (see Eq. (\ref{eq:trajectories2}))
$$
\left(
A^{l} B, A^{l-1}B,\hdots,B
\right)
\left( \begin{array}{c}
\widehat{u}_{\tau}-u_{\tau}\\
\widehat{u}_{\tau+1}-u_{\tau+1}\\
\vdots\\
\widehat{u}_{\tau+1}-u_{\tau+l}
\end{array}\right)
$$
and, under the assumptions 1) and 6), we can decode $\Phi_{l+1}(\Sigma)$ and find
$$\left( \begin{array}{c}
u_{\tau}\\
u_{\tau+1}\\
\vdots\\
u_{\tau+l}
\end{array}\right)$$
since 
$$
\left( \begin{array}{c}
\widehat{u}_{\tau}-u_{\tau}\\
\widehat{u}_{\tau+1}-u_{\tau+1}\\
\vdots\\
\widehat{u}_{\tau+1}-u_{\tau+l}
\end{array}\right)
$$
is the unique coset leader in its coset (see \ref{sec:block}).
So far, we have correctly computed the input vectors ($u$'s) from $t=\tau$ to $t=\tau+l+\Theta$ and the output vectors ($y$'s) from $t=\tau+l+1$ to $\tau+l+\Theta$, and we can use the equations (\ref{eqslinearsys}) to compute the remaining outputs from $t=\tau$ to $t=\tau+l$ so the decoding is complete.

In general, above assumptions 5) and 6) might not hold so the decoding process may give no output or may give an error. However, under certain mild assumptions the algorithm ensures that we get the correct output. In fact, \cite[Theorem 3.4]{decodingros} can be stated exactly in the same terms for observable convolutional codes with the PDP over $\zpr$.
\begin{theorem}\label{th:theorem}
Let $\mathcal{C}\subset\zprz^{n}$ be an $(n,k)$  observable PDP convolutional code, and $\Sigma=(A,B,C,D)$ a minimal I/S/O representation of $\mathcal{C}$. 
Let $T>\Theta$ be natural numbers such that 
\begin{enumerate}
\item $A$ is invertible.
\item $\Phi_{T}(\Sigma)$ is surjective and the code $\Ker(\Phi_{l}(\Sigma))$ has distance $d_1$.
\item $\Psi_{\Theta}(\Sigma)$ is injective and the code $\im(\Psi_{\Theta}(\Sigma))$ has distance $d_2$.
\end{enumerate}
Then, if we receive a message
$$
\left\{
\left(
\begin{array}{c}
\widehat{y}_{t}\\
\widehat{u}_t
\end{array}
\right)
\right\}_{t\geq 0}=
\left\{
\left(
\begin{array}{c}
y_{t}+f_{t}\\
u_{t}+e_{t}
\end{array}
\right)
\right\}_{t\geq 0}
$$
such that, for any $\tau\geq 0$, 
 the weight of the errors in the time-window $[\tau,\tau+T-1]$
is bounded by 
\begin{equation}\label{eq:lambda}
\lambda:=\textrm{min}\left(
\Big\lfloor\dfrac{d_1 -1}{2}\Big\rfloor,\Big\lfloor\dfrac{T}{2\Theta}\Big\rfloor
\right),
\end{equation}
it is possible to uniquely compute the transmitted sequence.
\end{theorem}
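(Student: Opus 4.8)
The plan is to carry Rosenthal's window-by-window scheme \cite{decodingros} over to $\zprz$ essentially verbatim, replacing its two field block-decoding subroutines by the ring decoders available here: the GV algorithm (Algorithm \ref{alg:G}) for the generator-matrix code $\textrm{Im}(\Psi_{\Theta}(\Sigma))$ and the adapted TLN algorithm (Algorithm \ref{alg:H}) for the parity-check-matrix code $\textrm{Ker}(\Phi_{T}(\Sigma))$. I would argue by induction on a sliding decoding window, the hypothesis being that all symbols up to some time $\tau$ have been recovered and the true state $x_{\tau}$ is known; the base case is $\tau=0$ with $x_{0}=0$ from \eqref{eqslinearsys}. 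Since $A$ is invertible (assumption (1)), $\Psi_{\Theta}(\Sigma)$ is injective (assumption (3)) and $\Phi_{T}(\Sigma)$ is surjective (assumption (2)), the two block codes have exactly the rank and syndrome/generator structure required by these algorithms.

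The structural backbone is a counting argument on the window $[\tau,\tau+T-1]$. The uniform bound forces the total Hamming weight of the errors $\{(f_{t},e_{t})\}$ in this window to be at most $\lambda\leq\lfloor T/(2\Theta)\rfloor$, so at most $\lambda$ distinct time instants carry an error. Splitting the window into its $\lfloor T/\Theta\rfloor$ disjoint blocks of length $\Theta$ and using $2\lambda\leq\lfloor T/\Theta\rfloor$, at least one block $[\tau+l+1,\tau+l+\Theta]$ is entirely error-free; the factor $2$ moreover leaves enough room to place it deep enough that the reachability window $[\tau,\tau+l]$ linking it to the known state is long enough for the relevant reachability matrix to stay surjective while remaining inside a single length-$T$ error window. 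This is precisely what makes the informal assumptions 4) and 5) hold on that block without ever invoking $d_{2}$, which is why only $d_{1}$ enters the bound \eqref{eq:lambda}.

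On the error-free block, \eqref{eq:trajectories1} holds exactly, so its left-hand side equals $\Psi_{\Theta}(\Sigma)x_{\tau+l+1}$; injectivity of $\Psi_{\Theta}(\Sigma)$ determines $x_{\tau+l+1}$ uniquely, which I realize by running the GV decoder on $\textrm{Im}(\Psi_{\Theta}(\Sigma))$ (it returns the exact state, the error on that block being zero). Knowing $x_{\tau}$ and $x_{\tau+l+1}$, equation \eqref{eq:trajectories2} rewrites $x_{\tau+l+1}-A^{l+1}x_{\tau}$ as $\Phi_{l+1}(\Sigma)$ applied to the true inputs, so the received inputs produce a known syndrome, for the free block code $\textrm{Ker}(\Phi_{l+1}(\Sigma))$ of distance $d_{1}$, of the input error over $[\tau,\tau+l]$. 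As this error has weight at most $\lambda\leq\lfloor(d_{1}-1)/2\rfloor$, it is the unique coset leader of its coset (Section \ref{sec:block}), and I recover it with the adapted TLN decoder, hence the true inputs $u_{\tau},\dots,u_{\tau+l}$. The recursion \eqref{eqslinearsys} then fills in the intermediate states and the outputs $y_{\tau},\dots,y_{\tau+l}$, completing the decoding up to $t=\tau+l$ and leaving the known state $x_{\tau+l+1}$; resetting $\tau\mapsto\tau+l+1$ advances the induction.

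The step I expect to be the genuine obstacle, and the only place where the ring $\zpr$ (rather than a field) intervenes, is justifying that these two ring decoders actually return the unique answer. Here I would rely on Section \ref{sec:block}: the codes $\textrm{Im}(\Psi_{\Theta}(\Sigma))$ and $\textrm{Ker}(\Phi_{l+1}(\Sigma))$ are free, so their minimum distance equals that of their reduction modulo $p$, and an error of weight at most $\lfloor(d-1)/2\rfloor$ has every $p$-adic digit supported within the support of the error, hence of weight at most $\lfloor(d-1)/2\rfloor$ as well. Since $d$ is also the minimum distance of the reduction modulo $p$, each digit is then correctable by the fixed decoder $\partial$, which is exactly the hypothesis under which Algorithms \ref{alg:G} and \ref{alg:H} are proved to succeed. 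Granting this, every window is handled identically and the induction reconstructs the whole transmitted sequence. The remaining bookkeeping — matching the reachability and observability indices to the surjectivity of $\Phi_{T}(\Sigma)$ and injectivity of $\Psi_{\Theta}(\Sigma)$ — is routine and identical to the field case \cite{decodingros}.
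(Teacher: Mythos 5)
Your overall framing --- reduce to Rosenthal's two-block-code scheme and plug in the ring decoders, using freeness to see that the minimum distance survives reduction modulo $p$ --- matches the paper's intent. But there is a genuine gap at the heart of your induction step: your counting argument only proves that an error-free block of length $\Theta$ \emph{exists} somewhere in the window; it gives the decoder no way to \emph{find} it. If the observability decoding in \eqref{eq:trajectories1} is run on a block that does contain errors, it may still terminate ``successfully'' and return a wrong state $x_{\tau+l+1}$; the syndrome step then returns wrong inputs, and your induction would certify an incorrect prefix and propagate a wrong state to the next window. Your proof never explains how the algorithm distinguishes this case from the good one, and your advance of the window by $l+1$ (declaring everything up to $t=\tau+l$ correct) is precisely the conclusion that fails in this scenario.

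The paper closes this gap differently: it starts with $l=T-\Theta$, and whenever decoding fails \emph{or} the weight of the computed errors exceeds the current budget, it shrinks $l$ by $\Theta$, decrements $\lambda$ by one, and retries; crucially, it then invokes the ring analogue of \cite[Lemma 3.3]{decodingros} to show that whichever attempt is finally accepted, the first $\Theta$ symbols and the state $x_{\tau+\Theta}$ are guaranteed correct even though the later ones need not be --- which is why the algorithm only commits to a length-$\Theta$ prefix per outer iteration. That lemma is where $d_1$ and the factor $\lfloor T/(2\Theta)\rfloor$ actually do their work, and it is also where the ring genuinely enters: one must argue that the distance-$d_1$ separation between the true and the computed information words persists after reduction modulo $p$ because the relevant block code is free, and that Hamming distance over $\zpr$ dominates Hamming distance of the reductions. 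You touch on freeness only in justifying the GV/TLN subroutines (which, incidentally, is needed for the Corollary rather than for this theorem, whose conclusion is about unique determination). To repair your argument you would need to state and prove this acceptance-implies-correct-prefix lemma over $\zpr$; without it the induction hypothesis cannot be re-established.
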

\begin{proof}
The proof follows exactly the same steps as in \cite{decodingros}, that is, we decode iteratively as stated above for $l=T-\Theta$ (see \textbf{Algorithm} \ref{alg:dec}). If it is not possible to decode in Eq. (\ref{eq:trajectories1}), or the we find that the weight of the computed errors from $t=\tau$ to $t=\tau+T-\Theta$ is larger than $\lambda$, we repeat again the procedure with $l=T-2\Theta$ and substitute $\lambda$ by $\lambda-1$. We proceed in this way until we can decode successfully. As in the field case we can not be sure that the calculation of $x_{\tau+l}$ (for $l=T-h\Theta$ if we needed $h$ iterations) is correct. However, \cite[Lemma 3.3]{decodingros} also holds in the ring case, so the conclusion that the state $x_{\tau+\Theta}$ is correct holds as well in our situation. The proof of \cite[Lemma 3.3]{decodingros} in the ring case is the same as in the field case. Note that, in our situation, \cite[Eq. (3.6)]{decodingros} implies that the distance between the vector of true information words modulo $p$ and the vector of computed information words modulo $p$ is at least $d_1$ because the corresponding linear block code is splitting so its distance remains the same after restricting modulo $p$. Since the distance between vectors is at least the distance between the vectors restricted modulo $p$, we are done because the rest of the argument remains valid in our case.
\end{proof}
%Note that in order to prove the above theorem we did not needed the splitting condition for the code $\im(\Psi)_{\Theta}(\Sigma)$ since the succesfullness of the 
%Rosenthal's algorithm only relays on the fact that, theoretically, the code may be decoded. If we want to make the algorithm applicable, then we need to fix decoding algorithms for the two underlying linear block codes. Since we already know that, under our conditions, there are always parameters $T,\Theta$ making the underlying linear block codes to be splitting, and there are very efficient decoding algorithms for such codes, we state the las result making use of this advantage.
\begin{remark}[The splitting condition]
It is important to note that, in the proof of the above theorem, we did not require the code $\operatorname{Im}(\Psi_{\Theta}(\Sigma))$ to be splitting. Indeed, the theoretical success of Rosenthal’s decoding algorithm relies only on the assumption that the associated linear block codes can, in principle, be decoded—regardless of the existence of an explicit decoding procedure. However, in order to implement the algorithm in practice, it becomes necessary to fix specific decoding algorithms for the two associated linear block codes. Under our assumptions, one can always choose parameters $T$ and $\Theta$ such that both $\operatorname{Ker}(\Phi_T(\Sigma))$ and $\operatorname{Im}(\Psi_{\Theta}(\Sigma))$ are splitting codes (in fact the first one is always splitting). Since efficient decoding algorithms are available for splitting linear codes over $\mathbb{Z}_{p^r}$, we formulate the final result by explicitly exploiting this structure.
\end{remark}

\begin{algorithm}[h]
\caption{Combined algorithm}\label{alg:dec}
\begin{algorithmic}
\State \hspace{-0.4cm} \textbf{Parameters:}  Minimal I/S/O $\Sigma=(A,B,C,D)$ of an observable convolutional code with PDP, natural numbers $T>\Theta$ satisfying assumptions of Theorem \ref{th:theorem}, $\lambda$ as in Eq. (\ref{eq:lambda}).
\State \hspace{-0.4cm} \textbf{Input:} \\
$\bullet$  
%A block of length $T$ of a message
%$
%\left\{
%\left(
%\begin{array}{c}
%\widehat{y}_{t}\\
%\widehat{u}_t
%\end{array}
%\right)
%\right\}_{t\geq 0}=
%\left\{
%\left(
%\begin{array}{c}
%y_{t}+f_{t}\\
%u_{t}+e_{t}
%\end{array}
%\right)
%\right\}_{t\geq 0}
%$ (let us say, from $t=\tau$ to $t=\tau+T-1$)\\
Window of length $T$ of received pairs $\left\{(\hat{y}_t, \hat{u}_t)\right\}_{t=\tau}^{\tau+T-1}$\\
$\bullet$ A natural number $h$ with $T\geq h\Theta$.
\State \hspace{-0.4cm}  \textbf{Output:} \\
%$\bullet$ Either: an estimation of the first $\Theta$ words 
%$\left\{
%\left(
%\begin{array}{c}
%\tilde{y}_{t}\\
%\tilde{u}_t
%\end{array}
%\right)
%\right\}_{t=\tau}^{\tau+\Theta-1}$ and the state $x_{\tau+\Theta}$ if possible,\\
$\bullet$  Estimation of the sequence $\left\{(\tilde{y}_t, \tilde{u}_t)\right\}_{t=\tau}^{\tau+\Theta-1}$ and state $x_{\tau+\Theta}$\\
$\bullet$ Or: an ERROR  message together with $h\gets h+1$.\\
\State $\lambda\gets\lambda-h$
\State $\Theta\gets h\Theta$
\State $l\gets T-h\Theta$
%%%%%%%%%%%%
%\State \textbf{Step 1: Decode the Output Sequence}
%\If{Decoding on (\ref{eq:trajectories1}) using $\Psi_\Theta(\Sigma)$ succeeds} 
%\State Compute syndrome (\ref{eq:trajectories2})
%\If{Syndrome decoding on (\ref{eq:trajectories2}) is successful} 
%\State Use (\ref{eqslinearsys}) to compute the remaining codewords
%\Else 
%\State \Return{(ERROR, $h\gets h+1$)}
%\EndIf
%\Else
%\State \Return{(ERROR, $h\gets h+1$)}
%\EndIf
%\If{$\sum_{\tau}^{\tau+T-\Theta}(\omega_{H}(f_{t})+\omega_{H}(e_{t}))>\lambda$}
%\State \Return{(ERROR, $h\gets h+1$)}
%\Else
%\State \Return{
%(
%$\left\{
%\left(
%\begin{array}{c}
%\tilde{y}_{t}\\
%\tilde{u}_t
%\end{array}
%\right)
%\right\}_{t=\tau}^{\tau+\Theta-1}$, 
%state $x_{\tau+\Theta}$
%)
%}
%\EndIf
%\end{algorithmic}
%%%%%%%%%%%%%
\State \textbf{Step 1: Decode the Output Sequence}
\If {Decoding of Eq. (8) using $\Psi_\Theta(\Sigma)$ succeeds}
    %\State Compute $x_{\tau+l+1}$ from recovered $y_{\tau+l+1}, \dots, y_{\tau+l+\Theta}$
    \State We get $x_{\tau+l+1}$ and may compute $y_{\tau+l+1}, \dots, y_{\tau+l+\Theta}$

    \State \textbf{Step 2: Syndrome Decoding for Input}
    \State Compute syndrome from Eq. (9) using $x_{\tau+l+1}$
    \If {Decoding using $\Phi_{l+1}(\Sigma)$ succeeds}
        \State Compute $(u_{\tau},y_{\tau}), \dots, (u_{\tau+l},y_{\tau+l})$ using Eq. (4)
%        \State Compute $y_{\tau}, \dots, y_{\tau+l}$ using Eq. (4)
    \Else
        \State \textbf{return} ERROR, $h \gets h + 1$
    \EndIf
\Else
    \State \textbf{return} ERROR, $h \gets h + 1$
\EndIf

\State \textbf{Step 3: Verify Error Weight}
\If {Total weight of errors $\sum_{\tau}^{\tau+T-\Theta}(\omega_{H}(f_{t})+\omega_{H}(e_{t}))>\lambda$}
    \State \textbf{return} ERROR, $h \gets h + 1$
\Else
    \State \textbf{return} Estimated 
%$\left\{
%\left(
%\begin{array}{c}
%\tilde{y}_{t}\\
%\tilde{u}_t
%\end{array}
%\right)
%\right\}_{t=\tau}^{\tau+\Theta-1}$
$\left\{(\tilde{y}_t, \tilde{u}_t)\right\}_{t=\tau}^{\tau+\Theta-1}$
and state $x_{\tau+\Theta}$
\EndIf
\end{algorithmic}
\end{algorithm}

\begin{corollary}
Let $\mathcal{C}\subset\zprz^{n}$ be an $(n,k)$  observable PDP convolutional code, and $\Sigma=(A,B,C,D)$ a minimal I/S/O representation of $\mathcal{C}$.
Let $T>\Theta$ be natural numbers satisfying the same conditions as in Theorem \ref{th:theorem} and additionally such that the linear code $\im(\Psi_{\Theta}(\Sigma))$ is splitting. 
Let $\partial_{gen}$, $\partial_{par}$ be decoding algorithms for the linear codes (over $\zp$) determined by $\im(\Psi_{\Theta}(\Sigma_0))$ (the first coefficient in the $p$-adic expansion of $\Psi_{\Theta}(\Sigma)$), and $\Ker(\Phi_{T}(\Sigma_0))$ (the first coefficient of the $p$-adic expansion of $\Phi_{T}(\Sigma)$) and let $t_{gen}, t_{par}$ be the correction capabilities of $\partial_{gen}$, $\partial_{par}$, respectively.
Then, if we receive a message
$$
\left\{
\left(
\begin{array}{c}
\widehat{y}_{t}\\
\widehat{u}_t
\end{array}
\right)
\right\}_{t\geq 0}=
\left\{
\left(
\begin{array}{c}
y_{t}+f_{t}\\
u_{t}+e_{t}
\end{array}
\right)
\right\}_{t\geq 0}
$$
such that, for any $\tau\geq 0$, 
\begin{enumerate}
\item the weight of the errors 
$
\left\{
\left(
\begin{array}{c}
f_{t}\\
e_{t}
\end{array}
\right)
\right\}_{t= \tau}^{\tau+T-1}$
is bounded by $\lambda:=\textrm{min}\left(
\Big\lfloor\dfrac{d_1 -1}{2}\Big\rfloor,\Big\lfloor\dfrac{T}{2\Theta}\Big\rfloor
\right)$.
\item the sequence of the $i$th coefficients in the $p$-adic expansion of the vectors of the error sequence $(f_{\tau},\hdots,f_{\tau+T-1})$ has weight less or equal to $t_{gen}$,
\item the sequence of the $i$th coefficients in the $p$-adic expansion of the vectors of the error sequence $(e_{\tau},\hdots,e_{\tau+T-1})$ has weight less or equal to $t_{par}$,
\end{enumerate}
it is possible to uniquely compute the transmitted sequence.
\end{corollary}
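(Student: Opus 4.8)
The plan is to deduce the statement directly from Theorem \ref{th:theorem} by substituting the two abstract block-code decoders used in its proof with the concrete GV and TLN procedures, and then to check that hypotheses (2) and (3) are exactly what guarantee the success of these two procedures. First I would recall the structure of Algorithm \ref{alg:dec}: each iteration decodes, via Eq. (\ref{eq:trajectories1}), the free linear block code $\textrm{Im}(\Psi_{\Theta}(A,C))$ over $\zpr$ whose generator matrix is the observability matrix, and then, via Eq. (\ref{eq:trajectories2}), the free linear block code $\textrm{Ker}(\Phi_{T}(A,B))$ over $\zpr$ whose parity-check matrix is the reachability matrix. Hypothesis (1) of the Corollary coincides with the hypothesis of Theorem \ref{th:theorem}, so the iterative skeleton, the bound $\lambda$ on the number of passes, and the uniqueness of the recovered sequence are inherited verbatim; what remains is to realize the two inner block-decoding steps.

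For the first inner step I would invoke the GV algorithm (Algorithm \ref{alg:G}) with the field decoder $\partial_{gen}$. Since $\Psi_{\Theta}(A,C)$ is injective over $\zpr$ by condition (3) of Theorem \ref{th:theorem}, the code $\textrm{Im}(\Psi_{\Theta}(A,C))$ is free and admits $\Psi_{\Theta}(A,C)$ as an encoder; thus $G_{0}=\Psi_{\Theta,0}$ generates its reduction modulo $p$, which is decoded by $\partial_{gen}$. By the correctness property stated for the GV algorithm, it corrects every output-error pattern whose $p$-adic coefficients are each correctable by $\partial_{gen}$. Hypothesis (2) asserts precisely that each $i$th $p$-adic coefficient of the output-error block $(f_{\tau},\hdots,f_{\tau+T-1})$ has weight at most $t_{gen}$, so the GV step returns the correct outputs and the correct state $x_{\tau+l+1}$ at every iteration.

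For the second inner step I would invoke the adapted TLN algorithm (Algorithm \ref{alg:H}) with the field decoder $\partial_{par}$. Since $\Phi_{T}(A,B)$ is surjective over $\zpr$ by condition (2) of Theorem \ref{th:theorem}, it is a full-rank parity-check matrix of the free code $\textrm{Ker}(\Phi_{T}(A,B))$, so the adapted TLN algorithm applies without computing a standard form and uses only the decoder $\partial_{par}$ for $H_{0}=\Phi_{T,0}$. By the correctness property stated for the TLN algorithm, it corrects every input-error pattern whose $p$-adic coefficients are each correctable by $\partial_{par}$; hypothesis (3) is exactly this condition for the input-error block $(e_{\tau},\hdots,e_{\tau+T-1})$ with bound $t_{par}$. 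Hence the syndrome decoding associated with Eq. (\ref{eq:trajectories2}) recovers the correct inputs at every iteration.

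Combining the two inner steps, every pass of Algorithm \ref{alg:dec} proceeds exactly as in the proof of Theorem \ref{th:theorem}, while condition (1) governs the passage from one value of $l=T-h\Theta$ to the next together with the unique-coset-leader property; therefore the transmitted sequence is uniquely recovered. The main obstacle I anticipate is bookkeeping rather than conceptual: one must verify that the correctness of GV and TLN genuinely hinges only on the per-coefficient weight conditions (2) and (3), and that these interact correctly with the global window bound (1) that drives the iteration. This rests on the freeness of both block codes, guaranteed by the injectivity of $\Psi_{\Theta}(A,C)$ and the surjectivity of $\Phi_{T}(A,B)$, and on the fact, already used in Theorem \ref{th:theorem}, that a free code over $\zpr$ has the same minimum distance as its reduction modulo $p$.
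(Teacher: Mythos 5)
Your proposal is correct and follows essentially the same route as the paper, which simply invokes Theorem \ref{th:theorem} and states that Algorithms \ref{alg:G} and \ref{alg:H} are plugged into the two block-decoding steps of Algorithm \ref{alg:dec}; your write-up merely spells out the verification (freeness of the two block codes, and that hypotheses (2) and (3) are exactly the per-coefficient correctability conditions for GV and TLN) that the paper leaves implicit.
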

\begin{proof}
It follows from Theorem \ref{th:theorem}, the splitting assumption and applying \textbf{Algorithm} \ref{alg:G} and \textbf{Algorithm} \ref{alg:H} in the decoding steps of \textbf{Algorithm} \ref{alg:dec}.
\end{proof}

%\begin{figure}[ht]
%\centering
%\begin{tikzpicture}[node distance=1.6cm and 2.2cm, on grid, auto]
%\tikzstyle{block} = [rectangle, draw, fill=green!10, text width=4cm, text centered, minimum height=1cm]
%\tikzstyle{cond} = [diamond, draw, fill=orange!20, text width=3.8cm, text centered, aspect=2]
%
%% Nodes
%\node[block] (start) {Start with received sequence};
%\node[block, below=of start] (initial) {Extract block of length $T$};
%\node[block, below=of initial] (step1) {Attempt decoding via $\Psi_\Theta$};
%\node[cond, below=of step1] (cond1) {Success?};
%
%\node[block, right=of cond1, xshift=2.3cm] (step2) {Compute state $x$};
%\node[block, below=of step2] (step3) {Decode using $\Phi_T$};
%\node[cond, below=of step3] (cond2) {Error weight $\leq \lambda$?};
%
%\node[block, below=of cond2] (success) {Output recovered codeword};
%\node[block, left=of cond1, xshift=-2.3cm] (fail) {Decrease $\lambda$, shift window};
%
%% Arrows
%\draw[->] (start) -- (initial);
%\draw[->] (initial) -- (step1);
%\draw[->] (step1) -- (cond1);
%\draw[->] (cond1) -- node {Yes} (step2);
%\draw[->] (step2) -- (step3);
%\draw[->] (step3) -- (cond2);
%\draw[->] (cond2) -- node {Yes} (success);
%\draw[->] (cond1) -- node {No} (fail);
%\draw[->] (cond2) -- node {No} (fail);
%\draw[->] (fail) |- (initial);
%\end{tikzpicture}
%\caption{Modular structure of the decoding algorithm with sliding window and threshold adjustment.}
%\end{figure}

\section{Performance and complexity of the algorithm}
Let $\mathcal{C}\subset\zprz^{n}$ be an $(n,k)$  observable PDP convolutional code, and $\Sigma=(A,B,C,D)$ a minimal I/S/O representation of $\mathcal{C}$.
Let $T>\Theta$ be natural numbers as in Theorem \ref{th:theorem}.

Regarding the performance of the algorithm, it is important to note that, essentially, it depends on the detection and correction capacity of the convolutional code modulo $p$. More precisely, in case 
$$\lambda=\Big\lfloor\dfrac{d_1 -1}{2}\Big\rfloor,$$
the theoretical performance of the algorithm depends on the minimum distance of the linear code whose parity-check matrix is given by $\Phi_{T}(\Sigma)$. Since this code is free, its minimum distance coincides with  the linear code over $\zp$ whose {parity-check matrix} is given by $\Phi_{T}(\Sigma)\textrm{mod} \ p$. This, together with \cite[Remark 7]{isoringmunoz}, allow to conclude (as in \cite[Remark 3.6]{decodingros}) that decoding is theoretically possible if no more than 
$$\dfrac{d_{\textrm{free}}(\mathcal{C} \ \textrm{mod} \ p)-1}{2}$$
errors occur in any time interval of length $T$.

Finally, regarding the time complexity of the algorithm, we have:

\begin{theorem}\label{thm:time-complexity}
The worst-case time complexity of Algorithm~3 satisfies
\begin{align*}
C_{\mathrm{total}}
=
O\!\left(\frac{T^2}{\Theta}\,\big(n\delta^2+n\delta k+n k+n\delta\big)\right)
+
\sum_{h=1}^{\lfloor T/\Theta\rfloor}\big(f(h\Theta)+g(T-h\Theta+1)\big).
\end{align*}
\end{theorem}

\begin{proof}
Fix an iteration and set $\Theta_h := h\Theta$, $l_h := T-\Theta_h$,
$H := \Big\lfloor \frac{T}{\Theta}\Big\rfloor$.
We bound each contribution to the cost of Algorithm~3 in that iteration.
\begin{enumerate}
\item $C_{\Psi}(h)=O(\textrm{Construction of $\Psi_{\Theta_h}(\Sigma)$})$.
By definition,
$\Psi_{\ell}(\Sigma)=\big(C,\;CA,\;\ldots,\;CA^{\ell-1}\big)^{\top}$,
which has size $\ell(n-k)\times\delta$. Construct it iteratively by setting
$M_0=C$ and $M_i=M_{i-1}A$ for $i=1,\ldots,\ell-1$.
Each multiplication $M_{i-1}A$ multiplies an $(n-k)\times\delta$ matrix by a
$\delta\times\delta$ matrix, hence costs $O((n-k)\delta^2)$ operations.
With $\ell=\Theta_h$ iterations, this yields
\[
C_{\Psi}(h)=O\big(\Theta_h\,(n-k)\,\delta^2\big).
\]

\item $C_{\Phi}(h)=O(\textrm{Construction of $\Phi_{l_h+1}(\Sigma)$)}$.
By definition,
$\Phi_{\ell}(\Sigma)=\big[A^{\ell-1}B,\;A^{\ell-2}B,\;\ldots,\;B\big]$,
which has size $\delta\times \ell k$. Construct the block columns iteratively by
$N_0=B$ and $N_i=AN_{i-1}$ for $i=1,\ldots,\ell-1$.
Each multiplication $AN_{i-1}$ multiplies a $\delta\times\delta$ matrix by a
$\delta\times k$ matrix, hence costs $O(\delta^2 k)$ operations.
With $\ell=l_h+1$ blocks, we obtain
\[
C_{\Phi}(h)=O\big((l_h+1)\,\delta^2\,k\big).
\]

\item $C_{\mathrm{aux}}(h)=O(\textrm{Computations})$.
In iteration $h$, the algorithm performs linear-time
operations such as state/output propagation and recomputation via
$x_{t+1}=Ax_t+Bu_t$ and $y_t=Cx_t+Du_t$ for $O(\Theta_h+l_h+1)$ time steps.
Each step costs at most
$O(\delta^2)+O(\delta k)+O((n-k)\delta)+O((n-k)k)
=
O(\delta^2+\delta k+(n-k)\delta+(n-k)k)$,
so
\[
C_{\mathrm{aux}}(h)
=
O\Big((\Theta_h+l_h+1)\,(\delta^2+\delta k+(n-k)\delta+(n-k)k)\Big).
\]
Any additional matrix--vector products, e.g.\ multiplying $\Phi_{l_h+1}$ by a vector
at cost $O(\delta(l_h+1)k)$, are dominated by the above term and/or by $C_{\Phi}(h)$.

\item $C_{\mathrm{check}}(h)=O(\textrm{Error-weight verification})$.
If the weight of each pair $(f_t,e_t)$ is computed by scanning $n$ components,
then over a window of length $T$ the verification costs $O(Tn)$, i.e.,
$C_{\mathrm{check}}(h)=O(Tn)$.
\end{enumerate}
Summing the per-iteration bounds for $h=1,\ldots,H$, and adding the decoding costs $f(\Theta_h)$ and $g(l_h+1)$, yields the expression 
\begin{equation}\label{eq:complexity-inter}
C_{\mathrm{total}}
=
\sum_{h=1}^{H}\Big(
C_{\Psi}(h) + f(\Theta_h)
+ C_{\Phi}(h) + g(l_h+1)
+ C_{\mathrm{aux}}(h)
+ C_{\mathrm{check}}(h)
\Big),
\end{equation}
Now, from $\Theta_h=h\Theta$ and $H=\lfloor T/\Theta\rfloor$ we obtain
\[
\sum_{h=1}^{H}\Theta_h
=
\Theta\sum_{h=1}^{H}h
=
\Theta\frac{H(H+1)}{2}
=
O\!\left(\frac{T^2}{\Theta}\right).
\]
Moreover, since $l_h=T-h\Theta$, it holds
\[
\sum_{h=1}^{H}(l_h+1)
=
\sum_{h=1}^{H}(T-h\Theta+1)
=
H(T+1)-\Theta\frac{H(H+1)}{2}
=
O\!\left(\frac{T^2}{\Theta}\right),
\]
and trivially $\sum_{h=1}^{H}Tn=HTn=O\!\left(\frac{T^2 n}{\Theta}\right)$.
Plugging these estimates into (\ref{eq:complexity-inter}) and keeping the sums of $f(h\Theta)$ and $g(T-h\Theta+1)$ explicit yields
\begin{align*}
C_{\mathrm{total}}
&=
O\!\left(
\frac{T^2}{\Theta}\Big(
(n-k)\delta^2+\delta^2k+\delta^2+\delta k+(n-k)\delta+(n-k)k+n
\Big)
\right)
\\
&\quad
+\sum_{h=1}^{\lfloor T/\Theta\rfloor}\big(f(h\Theta)+g(T-h\Theta+1)\big).
\end{align*}
Finally, since $n\geq k\geq 1$ and $\delta\geq 0$, it holds
$(n-k)\delta^2+\delta^2k+\delta^2+\delta k+(n-k)\delta+(n-k)k+n
=O(n\delta^2+n\delta k+nk+n\delta)$. Substituting this into the previous expression, we get the result.
\end{proof}

\section{Conclusion and future work}

%In this article, Rosenthal's algorithm for the decoding process of a convolutional code using I/S/O representations has been generalized to \( \zpr \). As future work, the goal is to extend the use of I/S/O representations for decoding over erasure channels to the context of convolutional codes over rings and compare the resulting decoding algorithm with recent works in this subject such as \cite{pinto-decoding-g}.

In this article, we have extended Rosenthal's decoding algorithm based on I/S/O representations from the field case to the setting of convolutional codes over the finite ring \( \mathbb{Z}_{p^r} \). The proposed decoding method combines system-theoretic techniques with block code decoders tailored for ring structures, and it is capable of handling noisy transmission channels through a modular and structured approach.

In future work, we aim to adapt this framework to handle erasure channels, where I/S/O-based decoding strategies have already shown advantages in the field case. Specifically, our goal is to construct a decoding procedure for convolutional codes over finite rings in the presence of erasures and to evaluate its performance relative to recent contributions such as \cite{pinto-decoding-g}, which target general codes over erasure channels. This direction can potentially enlarge the applicability of the system-theoretic decoding paradigm and further bridge the gap between abstract algebraic models and practical error-control strategies.

% if have a single appendix:
%\appendix[Proof of the Zonklar Equations]
% or
%\appendix  % for no appendix heading
% do not use \section anymore after \appendix, only \section*
% is possibly needed

% use appendices with more than one appendix
% then use \section to start each appendix
% you must declare a \section before using any
% \subsection or using \label (\appendices by itself
% starts a section numbered zero.)
%

%\appendices
%\section{Proof of the First Zonklar Equation}

% you can choose not to have a title for an appendix
% if you want by leaving the argument blank
%\section{Proof of the Second Zonklar Equation}

% use section* for acknowledgment
%\section*{Acknowledgment}
%The work was framed in the project
%\emph{Algebraic Methods for the Recovering, Correcting and Security of Digital Information (MARCSID)} funded by the Spanish Ministry of Science and Innovation, under the  State Plan for Scientific and Technical Research and Innovation 2021-2023, with  grant number TED2021-131158A-I00. 

\bibliography{biblio}
\bibliographystyle{plain}

\end{document}